\definecolor{darkred}{rgb}{0.8,0,0}
\newcommand{\fig}{./fig}
\newcommand{\figref}[1]{\figurename~\ref{#1}}
\newtheorem{observation}[theorem]{Observation}
\newcommand{\cw}{\mathsf{cw}} 
\newcommand{\bn}{\mathsf{b}} 
\newcommand{\dist}{\operatorname{dist}}
\title{Parameterized Complexity of Graph Burning}
\author{Yasuaki Kobayashi}{Kyoto University, Kyoto, Japan}{kobayashi@iip.ist.i.kyoto-u.ac.jp}{0000-0003-3244-6915}
{JSPS KAKENHI Grant Number JP20K19742.}
\author{Yota Otachi}{Nagoya University, Nagoya, Japan}{otachi@nagoya-u.jp}{0000-0002-0087-853X}
{JSPS KAKENHI Grant Numbers JP18K11168, JP18K11169, JP18H04091.}
\authorrunning{Y. Kobayashi, Y. Otachi}
\keywords{Graph burning, parameterized complexity, fixed-parameter tractability}
\begin{document}
\maketitle

\begin{abstract}
\textsc{Graph Burning} asks, given a graph $G = (V,E)$ and an integer $k$,
whether there exists $(b_{0},\dots,b_{k-1}) \in V^{k}$ such that every vertex in $G$ has distance at most $i$ from some $b_{i}$.
This problem is known to be NP-complete even on connected caterpillars of maximum degree $3$.
We study the parameterized complexity of this problem 
and answer all questions arose by Kare and Reddy~[IWOCA 2019] about parameterized complexity of the problem.
We show that the problem is W[2]-complete parameterized by $k$
and that it does not admit a polynomial kernel parameterized by vertex cover number unless $\mathrm{NP} \subseteq \mathrm{coNP/poly}$.
We also show that the problem is fixed-parameter tractable parameterized 
by clique-width plus the maximum diameter among all connected components.
This implies the fixed-parameter tractability parameterized by modular-width, by treedepth, and by distance to cographs.
Although the parameterization by distance to split graphs cannot be handled with the clique-width argument,
we show that this is also tractable by a reduction to a generalized problem with a smaller solution size.
\end{abstract}

\section{Introduction}
Bonato, Janssen, and Roshanbin~\cite{BonatoJR14,BonatoJR16} introduced \textsc{Graph Burning} as a model of information spreading.
This problem asks to burn all the vertices in a graph in the following way:
we first pick a vertex and set fire to the vertex;
at the beginning of each round, the fire spreads one step along edges;
at the end of each round, we pick a vertex and set fire to it;
the process finishes when all vertices are burned.
The objective in the problem is to minimize the number of rounds
(including the first one for just picking the first vertex) to burn all the vertices.
The minimum number of rounds that can burn a graph $G$ in such a process is 
the \emph{burning number} of $G$, which is denoted by $\bn(G)$.
Given a graph $G$ and an integer $k$, \textsc{Graph Burning} asks whether $\bn(G) \le k$.

In other words, the burning number of $G$ can be defined as the minimum length $k$ of 
a sequence $(b_{0}, \dots, b_{k-1})$ of vertices of $G$ such that 
every vertex in $G$ has distance at most $i$ from some $b_{i}$.
We call such a sequence a \emph{burning sequence}.
Note that in this definition, $b_{i}$ is the vertex we set fire in the $(k-i)$th round.
Note also that we do not ask the vertices $b_{0}, \dots, b_{k-1}$ to be distinct.
It is also useful to introduce the generalized neighborhood of vertices.
For a vertex $v$ of a graph $G$, 
let $N_{d}[v]$ be the set of vertices with distance at most $d$ in $G$.
For example, $N_{0}[v]$ contains only $v$,
$N_{1}[v]$ is just the closed neighborhood of $v$,
and $N_{2}[v] = \bigcup_{u \in N[v]} N[u]$.
With this terminology,
a sequence $(b_{0}, \dots, b_{k-1})$ of vertices of $G = (V,E)$ is a burning sequence of $G = (V,E)$
if and only if $\bigcup_{0 \le i \le k-1}N_{i}[b_{i}] = V$.

\subsection{Previous work}

As a model of information spreading, 
it is important to know how fast the information can spread under the model in the worst case.
This question can be answered by finding the maximum burning number of graphs of $n$ vertices.
The literature is rich in this direction.
In the very first paper~\cite{BonatoJR14,BonatoJR16},
it is shown that $\bn(G) \le 2 \lceil \sqrt{n} \rceil -1$ for every connected graph $G$ of order $n$
and conjectured that $\bn(G) \le \lceil \sqrt{n} \rceil$ holds.
Note that the connectivity requirement is essential here as an edgeless graph of order $n$ needs $n$ rounds.
Some improvements of the general upper bound 
and studies on special cases are done~\cite{LandL16,MitschePR17,FitzpatrickW17arxiv,BessyBJRR18,DasDSSS18,MitschePR18,BonatoEKM19arxiv,BonatoL19,LiuZH19,Hiller19arxiv,BonatoGS20,LiuHH20a,LiuHH20b,TanT20,FarokhTTB20arxiv},
but the conjecture of $\bn(G) \le \lceil \sqrt{n} \rceil$ for general connected graphs remains unsettled.
The current best upper bound is $\lceil (-3 + \sqrt{24n+ 33})/4 \rceil \approx \sqrt{1.5 n}$~\cite{LandL16}.

The computational complexity of \textsc{Graph Burning} has been studied intensively as well.
It is shown that \textsc{Graph Burning} is NP-complete on trees of maximum degree 3, spiders, and linear forests~\cite{BessyBJRR17}.
The NP-completeness result is further extended to connected caterpillars of maximum degree $3$~\cite{LiuHH20a},
which form subclasses of connected interval graphs, connected permutation graphs, and connected unit disk graphs.
On the other hand, \textsc{Graph Burning} admits a 3-approximation algorithm for general graphs~\cite{BonatoK19};
that is, given a graph $G$, the algorithm finds a burning sequence of $G$ of length at most $3 \cdot \bn(G)$ in polynomial time.
Algorithms with approximation factors parameterized by path-length and tree-length are known as well~\cite{KamaliMZ20}.
Bonato and Kamali~\cite{BonatoK19} asked whether the problem is APX-hard.
Recently, this question has been answered in the affirmative~\cite{MondalPKR20arxiv}.

Kare and Reddy~\cite{KareR19} initiated the study on parameterized complexity of \textsc{Graph Burning}.
They showed that \textsc{Graph Burning} on connected graphs is fixed-parameter tractable parameterized
by distance to cluster graphs (disjoint unions of complete graphs) and by neighborhood diversity.
The parameterized complexity with respect to the natural parameter $k$, the burning number, remained open.
Recently, Janssen~\cite{Janssen20arxiv} has generalized the problem to directed graphs and has shown that 
the directed version is W[2]-complete parameterized by $k$ even on directed acyclic graphs.
It was mentioned in~\cite{Janssen20arxiv} that the original undirected version parameterized by $k$ was still open.

For further information about the previous studies, 
see the recent comprehensive survey by Bonato~\cite{Bonato20arxiv}.

\subsection{Our results}
In the literature, the input graph of \textsc{Graph Burning} is sometimes assumed to be connected (e.g.\ in~\cite{KareR19}).
In this paper, however, we do not generally assume the connectivity of input graphs.
All positive results in this paper hold on possibly disconnected graphs,
while all negative results hold even on connected graphs.
Note that in \textsc{Graph Burning}, the disconnected case would be nontrivially more complex than the connected case.
For example, while the burning number of a path of $n$ vertices is $\lceil \sqrt{n} \rceil$~\cite{BonatoJR14,BonatoJR16},
\textsc{Graph Burning} is NP-complete on graphs obtained as the disjoint union of paths~\cite{BessyBJRR17}.

Our study in this paper is inspired by Kare and Reddy~\cite{KareR19} and Janssen~\cite{Janssen20arxiv}.
We generalize the results in~\cite{KareR19} and solve all open problems on parameterized complexity in~\cite{KareR19}.
In Section~\ref{sec:tractable}, we present our positive algorithmic results.
We show that \textsc{Graph Burning} is fixed-parameter tractable parameterized
by clique-width plus the maximum diameter among all connected components.
This implies that \textsc{Graph Burning} is fixed-parameter tractable parameterized
by modular-width, by treedepth, and by distance to cographs.
We also show that \textsc{Graph Burning} is fixed-parameter tractable parameterized by distance to split graphs.
The complexity parameterized by distance to cographs and by distance to split graphs are explicitly asked in~\cite{KareR19}.
The fixed-parameter tractability parameterized by modular-width generalizes 
the one parameterized by neighborhood diversity in~\cite{KareR19}.
In Section~\ref{sec:intractable}, we present some negative results.
We show that \textsc{Graph Burning} parameterized by the natural parameter $k$ is W[2]-complete.
This settles the main open problem in this line of research~\cite{KareR19,Janssen20arxiv}.
As a byproduct, we also show that \textsc{Graph Burning} parameterized by vertex cover number 
does not admit a polynomial kernel unless $\mathrm{NP} \subseteq \mathrm{coNP/poly}$.
This also answers a question in~\cite{KareR19}.
See \figref{fig:width-parameters} for a summary of the results.

\begin{figure}[htb]
  \centering
  \begin{tikzpicture}[]
    \node (widthparas) at (0,0) {\includegraphics[width=\textwidth]{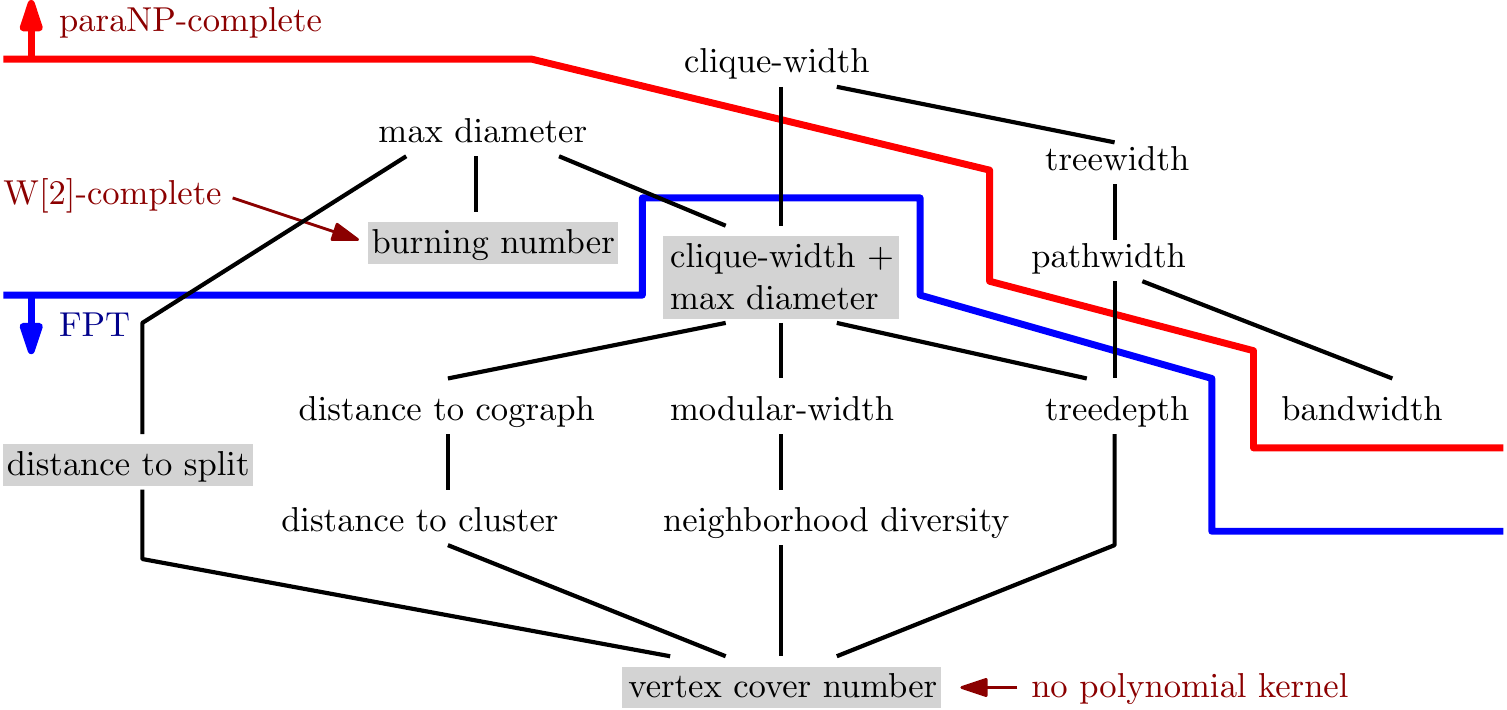}};
    \node (nd)  at (2.7,-1.55) {\footnotesize \cite{KareR19}};
    \node (dtc) at (-1.45,-1.55) {\footnotesize \cite{KareR19}};
    \node (bw)  at (6.75,-.53) {\footnotesize \cite{LiuHH20a}};
  \end{tikzpicture}
  \caption{Graph parameters and the complexity of \textsc{Graph Burning}.
    The results on the parameters with dark background are the main results of the paper.
    Connections between two parameters imply the existence of a function in the one above 
    (being in this sense more general) that lowerbounds the one below.    
    ``The maximum diameter among all connected components'' is shortened as ``max diameter.''
    The paraNP-completeness parameterized by bandwidth follows from the result 
    by Liu et al.~\cite{LiuHH20a} who showed that the problem is NP-complete 
    on caterpillars of maximum degree $3$, which have bandwidth at most $2$.}
  \label{fig:width-parameters}
\end{figure}

We assume that the readers are familiar with the basic terms and concepts in
the parameterized complexity theory. 
See some textbooks in the field (e.g., \cite{DowneyF99,CyganFKLMPPS15}) for definitions.
We omit the definitions of most of the graph parameters in this paper as we do not explicitly need them.
We only need the definition of the \emph{vertex cover number} of a graph:
it is the minimum integer $k$ such that their exists a set of $k$ vertices of the graph (called a \emph{vertex cover})
such that each edge in the graph has at least one endpoint in the set.
We refer the readers to~\cite{SorgeW19} for the definitions of other graph parameters and the hierarchy among them.


\section{Positive results}
\label{sec:tractable}

We first observe that \textsc{Graph Burning} is 
expressible as a first order logic (FO) formula of length depending only on $k$.

The syntax of FO of graphs includes
(i) the logical connectives $\lor$, $\land$, $\lnot$, $\Leftrightarrow$, $\Rightarrow$,
(ii) variables for vertices,
(iii) the quantifiers $\forall$ and $\exists$ applicable to these variables, and
(iv) the following binary relations:
equality of variables, and
$\mathsf{adj}(u,v)$ for two vertex variables $u$ and $v$,
which means that $u$ and $v$ are adjacent.
If $G$ models an FO formula $\varphi$ with no free variables,
then we write $G \models \varphi$.

The following formula $\dist_{\le d}(v,w)$ is true
if and only if the distance between $v$ and $w$ is at most $d$:
\[
  \dist_{\le d}(v,w) \coloneqq \exists u_{0}, \dots, u_{d} \;
  (u_{0} = v) \land (u_{d} = w) \land
  \left(\bigwedge_{0 \le i < d} (u_{i} = u_{i+1}) \lor \mathsf{adj}(u_{i},u_{i+1})\right).
\]
Clearly, $\dist_{\le d}(v,w)$ has length depending only on $d$.
Now we define the formula $\varphi_{k}$
such that $G \models \varphi_{k}$ if and only if $(G,k)$ is a yes instance of \textsc{Graph Burning}
as follows:
\[
  \varphi_{k} \coloneqq \exists v_{0}, \dots, v_{k-1} \; \forall v \; 
  \bigvee_{0 \le i \le k-1} \dist_{\le i}(v, v_{i}).
\]

It is known that on nowhere dense graph classes,
testing an FO formula $\psi$ is fixed-parameter tractable parameterized by $|\psi|$,
where $|\psi|$ is the length of $\psi$~\cite{GroheKS17}.
(See \cite{GroheKS17} for the definition of nowhere dense graph classes.) 
Since $\varphi_{k}$ is an FO formula of length depending only on $k$,
the following holds.
\begin{observation}
  \textsc{Graph Burning} on nowhere dense graphs
  parameterized by $k$ is fixed-parameter tractable.
\end{observation}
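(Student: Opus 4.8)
The plan is to package \textsc{Graph Burning} as a first-order model-checking problem and then invoke the theorem of Grohe, Kreutzer, and Siebertz that FO model checking is fixed-parameter tractable on nowhere dense graph classes when parameterized by the length of the formula. The two formulas $\dist_{\le d}(v,w)$ and $\varphi_k$ have already been written down in the excerpt, so the substance of the argument is just to verify (i) correctness of these formulas and (ii) that $|\varphi_k|$ depends only on $k$.

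First I would argue correctness of $\dist_{\le d}(v,w)$. The formula asserts the existence of a walk $u_0,u_1,\dots,u_d$ from $v$ to $w$ in which every consecutive pair is either equal or adjacent; such a ``lazy walk'' of length $d$ exists precisely when $\dist_G(v,w)\le d$ (a shortest path of length $\ell\le d$ can be padded with $d-\ell$ stationary steps, and conversely any such walk yields a path of length at most $d$). Then I would argue correctness of $\varphi_k$: by the characterization recalled in the introduction, $(b_0,\dots,b_{k-1})$ is a burning sequence of $G$ iff $\bigcup_{0\le i\le k-1} N_i[b_i]=V$, i.e.\ iff for every $v\in V$ there is some $i$ with $\dist_G(v,b_i)\le i$. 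Existentially quantifying the $v_i$ and universally quantifying $v$ turns this into exactly $\varphi_k$, so $G\models\varphi_k$ iff $\bn(G)\le k$ iff $(G,k)$ is a yes-instance.

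Next I would bound the length. Each $\dist_{\le i}(v,v_i)$ has length $O(i)\le O(k)$, there are $k$ disjuncts, and the quantifier prefix has $k$ variables, so $|\varphi_k|=O(k^2)$, in particular a function of $k$ alone and independent of the input graph. Applying the result of~\cite{GroheKS17} to the class of nowhere dense graphs, with formula $\varphi_k$ and parameter $|\varphi_k|$ (hence parameter $k$), gives an algorithm deciding $G\models\varphi_k$ in time $f(k)\cdot n^{1+o(1)}$ for some computable $f$, which is exactly fixed-parameter tractability of \textsc{Graph Burning} parameterized by $k$ on nowhere dense graph classes.

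There is essentially no serious obstacle here: the only point requiring a little care is that~\cite{GroheKS17} establishes tractability for classes that are nowhere dense (and effectively so), whereas a single input graph $G$ is always trivially nowhere dense; the correct reading is that the statement should be understood for a fixed nowhere dense class $\mathcal{C}$ and inputs drawn from $\mathcal{C}$, with the running time's dependence on $\mathcal{C}$ absorbed into $f$. With that understood, the observation follows immediately, and no further work is needed beyond writing down the two routine correctness verifications above.
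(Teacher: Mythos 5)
Your proposal is correct and follows exactly the paper's argument: encode \textsc{Graph Burning} as the FO formula $\varphi_k$ (built from $\dist_{\le d}$), note that $|\varphi_k|$ depends only on $k$, and invoke the FO model-checking result of Grohe, Kreutzer, and Siebertz on nowhere dense classes. The correctness verifications and the length bound you supply are the routine details the paper leaves implicit; nothing further is needed.
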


For an $n$-vertex graph $G$ of clique-width at most $\cw$
and for a one-sorted monadic-second order logic (MSO$_{1}$) formula $\psi$,
one can check whether $G \models \psi$
in time $O(f(|\psi|, \cw) \cdot n^{3})$,
where $f$ is a computable function~\cite{CourcelleMR00,Oum08}.
Since an FO formula is an MSO$_{1}$ formula
and the length of $\varphi_{k}$ depends only on $k$,
we can observe the following fact.
\begin{observation}
  \label{obs:cw+k}
  \textsc{Graph Burning} parameterized by clique-width${}+k$ is fixed-parameter tractable.
\end{observation}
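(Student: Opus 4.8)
The plan is to read off the statement as a direct instantiation of the MSO$_{1}$ model-checking meta-theorem recalled just above. First I would invoke the inclusion of first-order logic into MSO$_{1}$: the formula $\varphi_{k}$ constructed above is syntactically an MSO$_{1}$ formula, since it uses only vertex variables, the adjacency relation, equality, Boolean connectives, and first-order quantification. Second, I would pin down a bound $|\varphi_{k}| \le g(k)$ for a computable function $g$ that is \emph{independent of the input graph}: each subformula $\dist_{\le i}(v, v_{i})$ has length $O(i)$ by its explicit definition, and $\varphi_{k}$ is the conjunction of one block of $k$ existential quantifiers, a single universal quantifier, and a $k$-fold disjunction of such distance formulas over $i = 0, \dots, k-1$, so $|\varphi_{k}| = O(k^{2})$.

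With these two facts in hand, I would apply the cited algorithm for MSO$_{1}$ model checking on graphs of clique-width at most $\cw$ to the formula $\psi = \varphi_{k}$: it decides whether $G \models \varphi_{k}$ in time $O(f(|\varphi_{k}|, \cw) \cdot n^{3}) = O(f(g(k), \cw) \cdot n^{3})$, which is of the form $h(\cw + k) \cdot n^{3}$ for a computable function $h$, hence fixed-parameter tractable parameterized by $\cw + k$. Correctness of this decision procedure for \textsc{Graph Burning} is immediate from the defining property $G \models \varphi_{k} \iff (G,k)$ is a yes instance, which was established when $\varphi_{k}$ was introduced.

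There is essentially no obstacle here; the only point that needs a line of care is the claim that $|\varphi_{k}|$ is controlled by $k$ alone and not by $n$, which is visible directly from the construction, as both the quantifier depth and the number of atoms depend only on $k$. I also note that the meta-theorem as stated takes the graph itself as input, so no separate clique-width approximation step is needed; were one to use a variant phrased in terms of a given $\cw$-expression, one would first run Oum's fixed-parameter algorithm to compute an expression of width bounded in $\cw$, which does not change the overall shape of the running time.
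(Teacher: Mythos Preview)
Your proposal is correct and follows exactly the same route as the paper: invoke $\mathrm{FO} \subseteq \mathrm{MSO}_{1}$, observe that $|\varphi_{k}|$ depends only on $k$, and apply the Courcelle--Makowsky--Rotics/Oum model-checking theorem. You simply spell out the $O(k^{2})$ size bound and the running-time composition more explicitly than the paper does.
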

We extend this observation in a nontrivial way to show the main result of this section.
To this end, it is useful to generalize $\varphi_{k}$ as follows.
For nonempty $I = \{i_{1}, \dots, i_{|I|}\} \subseteq \{0, \dots, k-1\}$,
let $\varphi_{I}$ be a formula that means that 
there are vertices $b_{i_{1}}, \dots, b_{i_{|I|}}$ such that
$\bigcup_{i_{j} \in I}N_{i_{j}}[b_{i_{j}}] = V$, which can be expressed as follows:
\[
  \varphi_{I} \coloneqq 
  \exists v_{i_{1}}, \dots, v_{i_{|I|}} \; \forall v \; 
  \bigvee_{1 \le j \le |I|} \dist_{\le i_{j}}(v, v_{i_{j}}).
\]
Clearly, checking $G \models \varphi_{I}$ is still fixed-parameter tractable
parameterized by clique-width${}+k$.

\begin{theorem}
\label{thm:fpt-cw+maxdiam}
\textsc{Graph Burning} is fixed-parameter tractable parameterized
by the clique-width of the input graph plus the maximum diameter among all connected components.
\end{theorem}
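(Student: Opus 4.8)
The plan is to reduce $(G,k)$ in polynomial time to the model-checking problem for an MSO$_{1}$ formula whose length is bounded by a computable function of the parameter, and then to invoke the clique-width meta-theorem of~\cite{CourcelleMR00,Oum08} exactly as in Observation~\ref{obs:cw+k}. Let $D$ be the maximum diameter among the connected components of $G$ and let $c$ be the number of connected components; both are computable in polynomial time. If $k < D$, then the formula $\varphi_{k}$ from the beginning of this section has length bounded by a function of $k$, hence of $D$, so Observation~\ref{obs:cw+k} already decides $(G,k)$ in time that is fixed-parameter tractable in $\cw(G)+D$. So from now on assume $k \ge D$.

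The starting point for this case is that a source $b_{i}$ with $i \ge D$ burns the entire connected component containing it (every vertex of a component of diameter at most $D$ lies within distance $D$ of every vertex of that component), whereas a source of any radius contributes nothing outside its own component. Split a burning sequence $(b_{0},\dots,b_{k-1})$ into the $D$ \emph{small} sources $b_{0},\dots,b_{D-1}$, of radii $0,\dots,D-1$, and the $k-D$ \emph{large} sources $b_{D},\dots,b_{k-1}$. I claim that $(G,k)$ is a yes-instance if and only if $\mu^{*}\ge c-(k-D)$, where $\mu^{*}$ is the largest $\ell$ such that there exist pairwise disjoint (necessarily nonempty) sets $S_{1},\dots,S_{\ell}\subseteq\{0,\dots,D-1\}$ and distinct connected components $C_{1},\dots,C_{\ell}$ of $G$ with the property that, for every $j$, sources of radii $S_{j}$ can be placed inside $C_{j}$ so as to burn all of $C_{j}$. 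For the forward direction, the large sources touch at most $k-D$ components and burn each of them completely, so each of the remaining (at least $c-(k-D)$) components is burned using only the small sources lying inside it; recording, for each such component, the radii of those small sources yields the desired disjoint sets. For the backward direction, place the small sources of radii $\bigcup_{j}S_{j}$ according to the witnessing placements, assign to each of the at most $k-D$ components not handled this way a distinct large source (which burns it completely), and place all remaining sources arbitrarily.

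It remains to decide whether $\mu^{*}\ge c-(k-D)$. If $c-(k-D)\le 0$ the answer is yes; if $c-(k-D)>D$ then, since $\mu^{*}\le D$ always (the $S_{j}$ are pairwise disjoint nonempty subsets of a $D$-element set), the answer is no; otherwise set $t\coloneqq c-(k-D)\in\{1,\dots,D\}$, and it suffices to test $\mu^{*}\ge t$. As $t\le D$ is bounded, this is expressible by an MSO$_{1}$ formula $\psi_{t}$ of length bounded by a computable function of $D$: existentially quantify vertex sets $X_{1},\dots,X_{t}$, assert in MSO$_{1}$ that each $X_{j}$ is a connected component and that the $X_{j}$ are pairwise disjoint, and take the disjunction, over the boundedly many tuples $(S_{1},\dots,S_{t})$ of pairwise disjoint subsets of $\{0,\dots,D-1\}$, of $\bigwedge_{1\le j\le t}\psi^{X_{j}}_{S_{j}}$, where $\psi^{X_{j}}_{S_{j}}$ is the natural variant of $\varphi_{S_{j}}$ asserting that sources of radii $S_{j}$ can burn the subgraph induced by $X_{j}$ (built, as $\varphi_{S_{j}}$ is, from the gadget $\dist_{\le d}$ whose length depends only on $d\le D$). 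Model-checking $\psi_{t}$ on $G$ via~\cite{CourcelleMR00,Oum08}, as in Observation~\ref{obs:cw+k}, then decides the instance in time $g(\cw(G),D)\cdot n^{3}$ for a computable $g$.

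The main obstacle is the equivalence claimed above: one has to verify that the interaction among the $k$ sources reduces exactly to the component-wise packing measured by $\mu^{*}$ --- in particular that large sources may as well be spent destroying whole components and that small sources never help outside their own component --- and then that the condition $\mu^{*}\ge t$ admits a description whose length depends neither on $k$ nor on the (possibly very large) number of connected components, which is precisely where we use that both $t$ and all relevant source radii are at most $D$. Everything else is the by-now-routine translation into MSO$_{1}$ and the appeal to the clique-width meta-theorem.
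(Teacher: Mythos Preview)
Your argument is correct and shares the paper's key structural insight: once $k\ge D$, each source of radius $\ge D$ burns a whole component, so the instance reduces to asking whether the small radii $\{0,\dots,D-1\}$ can be partitioned among at least $c-(k-D)$ components so that each is burned. Your equivalence with $\mu^{*}$ is exactly the paper's observation, and the case analysis ($t\le 0$, $t>D$, $1\le t\le D$) is sound.

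Where you diverge from the paper is in how you decide $\mu^{*}\ge t$. The paper first applies the clique-width meta-theorem \emph{per component} to compute, for every component $C_{q}$ and every $I\subseteq\{0,\dots,D-1\}$, whether $C_{q}\models\varphi_{I}$; it then packages the result as an instance of \textsc{Disjoint Sets} (universe $\{0,\dots,D-1\}\cup\{c_{1},\dots,c_{p}\}$, sets $I\cup\{c_{q}\}$) and solves that by colour-coding in time $2^{O(D^{2})}\cdot n^{O(1)}$. You instead fold the whole question into a single MSO$_{1}$ sentence on $G$: quantify $t\le D$ vertex sets, force each to be a connected component, and take a finite disjunction over the $\le (D{+}1)^{D}$ tuples of pairwise disjoint radius sets, each conjunct being a relativised $\varphi_{S_{j}}$. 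This works because the number of components one needs to name, the number of tuples, and the length of each $\dist_{\le d}$ gadget are all bounded in $D$. The trade-off is that your route is conceptually cleaner (one formula, one call to the meta-theorem), whereas the paper's decomposition into ``profile computation'' plus \textsc{Disjoint Sets} yields a more transparent running-time bound and isolates the combinatorial core from the logical machinery.
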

\begin{proof}
Let $(G,k)$ be an instance of \textsc{Graph Burning}, where $G$ has $n$ vertices.
Let $C_{1}, \dots, C_{p}$ be the connected components of $G$
and $d_{\max}$ be the maximum diameter of the components.
We assume that $k \ge p$ since otherwise $(G,k)$ is a trivial no instance.
By Observation~\ref{obs:cw+k}, we can also assume that $d_{\max} < k$.

Observe that if a component $C_{q}$ contains $b_{i}$ for some $i \ge d_{\max}$, then $N_{i}[b_{i}] = V(C_{q})$.
Hence the problem is equivalent to finding a sequence $(b_{0}, \dots, b_{d_{\max}-1})$
that burns as many connected components as possible.
That is, we want to find a maximum cardinality subset $\mathcal{C} \subseteq \{C_{1}, \dots, C_{p}\}$
and a sequence $(b_{0}, \dots, b_{d_{\max}-1})$
such that $\bigcup_{0 \le i \le d_{\max}-1} N_{i}[b_{i}] = \bigcup_{C_{q} \in \mathcal{C}} V(C_{q})$.
It holds that $p-|\mathcal{C}| \le k - d_{\max}$ if and only if $(G,k)$ is a yes instance.

We reduce this problem to \textsc{Disjoint Sets}.
Given a universe $U$, a subset family $\mathcal{S} \subseteq 2^{U}$, and an integer $t$,
\textsc{Disjoint Sets} asks whether there are $t$ pairwise-disjoint subsets in $\mathcal{S}$.
Let $U=\{0,1,\dots,d_{\max}-1\} \cup \{c_{1}, \dots, c_{p}\}$ and 
$\mathcal{S} = \{I \cup \{c_{q}\} \mid I \subseteq \{0,\dots,d_{\max}-1\}, \; C_{q} \models \varphi_{I}, \; 1 \le q \le p\}$.
Clearly, picking $I \cup \{c_{q}\}$ into the solution for the \textsc{Disjoint Set} instance corresponds to 
burning $C_{q}$ with $\{b_{i} \mid i \in I\}$, and vice versa.
We set $t = p- k +d_{\max}$.
Note that $t \le d_{\max}$ as $k \ge p$.
Using the color-coding technique,
it can be shown that \textsc{Disjoint Sets} is solvable in time 
$2^{O(t \cdot \max_{S \in \mathcal{S}} |S|)} (|\mathcal{S}|+|U|)^{O(1)}$~\cite[\textsc{Disjoint $r$-Subsets}]{DowneyF99}
(see also~\cite[\textsc{Bounded Rank Disjoint Sets}]{DomLS14}).
In our instance, $t \cdot \max_{S \in \mathcal{S}} |S| \le d_{\max}(d_{\max}+1)$ holds.

For each $q \in \{1,\dots,p\}$ and for each $I \subseteq \{0,\dots,d_{\max}-1\}$,
we can test whether $C_{q} \models \varphi_{I}$
in time $O(f(d_{\max}+\cw) \cdot n^{3})$ for some computable function $f$,
where $\cw$ is the clique-width of $G$,
because $|\varphi_{I}|$ depends only on $d_{\max}$.
Thus, $\mathcal{S}$ can be constructed in time $O(f(d_{\max}+\cw) \cdot n^{3} \cdot p \cdot 2^{d_{\max}})$.
Since $|\mathcal{S}| \le 2^{d_{\max}} \cdot p$,
the last step of solving \textsc{Disjoint Sets} can be done in time
$2^{O(d_{\max}^{2})} (2^{d_{\max}} \cdot p + (d_{\max}+p))^{O(1)}$.
Since $p \le n$, the total running time is $g(d_{\max} + \cw) \cdot n^{O(1)}$ for some computable function $g$.
This completes the proof.
\end{proof}

The definition of modular-width implies that every connected component of a graph of modular-width at most $w$
has both diameter and clique-width at most $w$~\cite{GajarskyLO13}.
It is known that every connected component of a graph of treedepth at most $d$ has diameter at most $2^{d}$~\cite{NesetrilO2012}
and its clique-width is bounded by a function of treedepth (or even smaller treewidth)~\cite{CorneilR05}.
Therefore, Theorem~\ref{thm:fpt-cw+maxdiam} implies the fixed-parameter tractability with respect to these parameters.
\begin{corollary}
\label{cor:fpt-mw}
\textsc{Graph Burning} is fixed-parameter tractable parameterized by modular-width.
\end{corollary}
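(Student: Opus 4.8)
The plan is to obtain this corollary as an essentially immediate consequence of Theorem~\ref{thm:fpt-cw+maxdiam}, by bounding \emph{both} of the parameters appearing there in terms of modular-width. Recall that a graph of modular-width at most $w$ is built up from single vertices by repeatedly applying disjoint union, complete join, and substitution of already-constructed graphs into a graph on at most $w$ vertices; the whole argument rests on reading off two structural consequences of this recursive description.

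First I would argue that the clique-width of such a graph is bounded by a function of $w$: disjoint union and complete join change clique-width only by an additive constant, and a substitution into a graph on at most $w$ vertices can be carried out with at most $w$ labels, so the clique-width is at most (a fixed function of) $w$. This is exactly the bound recorded in~\cite{GajarskyLO13}, so I would cite it rather than reprove it.

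Second I would argue that every connected component has diameter bounded by a function of $w$. Consider the topmost operation in the decomposition that produces a given connected component. If it is a complete join of two nonempty parts, the diameter is at most $2$. Otherwise it is a substitution into a connected prime graph $H$ on at most $w$ vertices: any two vertices lie in modules whose representatives are at distance at most $\mathrm{diam}(H) \le w-1$ in $H$, and a shortest path between them uses at most that many inter-module steps (entering and leaving each module costs only a constant). So the diameter of each connected component is at most a function of $w$, which is again the statement cited from~\cite{GajarskyLO13}. Combining the two bounds, the clique-width plus the maximum diameter among connected components is bounded by a function of $w$, and Theorem~\ref{thm:fpt-cw+maxdiam} then gives a running time $g(w) \cdot n^{O(1)}$, i.e.\ fixed-parameter tractability parameterized by modular-width.

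The only mild obstacle is making the diameter bound fully rigorous in the presence of nested substitutions, but since the required structural fact about modular-width is already available in the literature, the proof reduces to invoking~\cite{GajarskyLO13} and then applying Theorem~\ref{thm:fpt-cw+maxdiam}.
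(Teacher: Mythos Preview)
Your proposal is correct and follows essentially the same approach as the paper: you bound both clique-width and the maximum component diameter by functions of modular-width (citing~\cite{GajarskyLO13}) and then invoke Theorem~\ref{thm:fpt-cw+maxdiam}. The paper states this more tersely, simply recording that modular-width at most $w$ forces both clique-width and component diameter to be at most $w$ by~\cite{GajarskyLO13}, whereas you additionally sketch why those bounds hold; but the logical structure of the argument is identical.
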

\begin{corollary}
\label{cor:fpt-td}
\textsc{Graph Burning} is fixed-parameter tractable parameterized by treedepth.
\end{corollary}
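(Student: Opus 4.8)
The plan is to obtain this as an immediate consequence of Theorem~\ref{thm:fpt-cw+maxdiam}, which gives fixed-parameter tractability parameterized by the clique-width of the input graph plus the maximum diameter $d_{\max}$ among its connected components. It therefore suffices to argue that, for a graph $G$ of treedepth at most $d$, both $\cw(G)$ and $d_{\max}$ are bounded by a computable function of $d$.

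First I would bound the diameter. Using the standard structural property that a connected graph of treedepth at most $d$ contains no path on $2^{d}$ vertices, and in particular has diameter at most $2^{d}$~\cite{NesetrilO2012}, and applying it componentwise, we obtain $d_{\max} \le 2^{d}$.

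Next I would bound the clique-width. Since treedepth upper bounds pathwidth and hence treewidth ($\tw(G) \le \pw(G) \le \td(G) - 1$), and graphs of bounded treewidth have bounded clique-width ($\cw(G) \le 3 \cdot 2^{\tw(G)-1}$)~\cite{CorneilR05}, the clique-width of $G$ is also bounded by a computable function of $d$ (concretely $\cw(G) \le 3 \cdot 2^{d-2}$ for $d \ge 1$).

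Plugging these two bounds into Theorem~\ref{thm:fpt-cw+maxdiam} yields an algorithm running in time $g(d) \cdot n^{O(1)}$ for some computable function $g$, which is exactly the desired statement. I do not anticipate any genuine obstacle here; the only point worth emphasizing is that treedepth simultaneously controls \emph{both} parameters appearing in Theorem~\ref{thm:fpt-cw+maxdiam}, so no additional parameter is needed for the reduction to go through.
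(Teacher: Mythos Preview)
Your proposal is correct and follows essentially the same route as the paper: bound the diameter of each connected component by $2^{d}$ via~\cite{NesetrilO2012}, bound the clique-width by a function of treedepth via treewidth~\cite{CorneilR05}, and then invoke Theorem~\ref{thm:fpt-cw+maxdiam}. The only difference is that you spell out the intermediate inequalities explicitly, which the paper leaves implicit.
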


For a graph class $\mathcal{C}$ and a graph $G$, 
the \emph{distance} from $G$ to $\mathcal{C}$ is defined as the minimum integer $k$
such that by removing at most $k$ vertices from $G$, one can obtain a member of $\mathcal{C}$.

Let $\mathcal{C}$ be a graph class with constants $c$ and $d$ such that
each graph in $\mathcal{C}$ has clique-width at most $c$ and
each connected component of each member of $\mathcal{C}$ has diameter at most $d$.
Observe that a graph of distance at most $k$ to $\mathcal{C}$ has clique-width at most $c \cdot 2^{k}$
since after a removal of a single vertex, the clique-width remains
at least half of the original clique-width~\cite{Gurski17}.
Observe also that each connected component of a graph of distance at most $k$ to $\mathcal{C}$
has diameter at most $k + d$.
Thus, by Theorem~\ref{thm:fpt-cw+maxdiam},
\textsc{Graph Burning} is fixed-parameter tractable parameterized by distance to $\mathcal{C}$.
This observation can be applied immediately to cographs that are known to be the $P_{4}$-free graphs
and the graphs of clique-width at most $2$.
\begin{corollary}
\textsc{Graph Burning} is fixed-parameter tractable parameterized by distance to cographs.
\end{corollary}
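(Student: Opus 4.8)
The plan is to derive this corollary as a direct instantiation of the general reduction stated just above, applied to the class $\mathcal{C}$ of cographs; the only work is to pin down the two constants $c$ and $d$ bounding the clique-width of a cograph and the diameter of each connected component of a cograph.

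First I would recall the two classical characterizations of cographs: they are precisely the $P_{4}$-free graphs, and they are precisely the graphs of clique-width at most $2$. The second characterization gives $c = 2$. Next I would establish that every connected cograph has diameter at most $2$, so that $d = 2$ works. This follows from the recursive definition of cographs: a connected cograph on at least two vertices is the \emph{join} of two smaller cographs, so two vertices lying in different parts are adjacent, while two vertices lying in the same part are connected by a path of length $2$ through any vertex of the other part.

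With $c = d = 2$ fixed, the observation preceding the statement applies without change. A graph $G$ at distance at most $k$ from the class of cographs has clique-width at most $c \cdot 2^{k} = 2 \cdot 2^{k}$, because deleting a single vertex decreases the clique-width by at most a factor of $2$~\cite{Gurski17}, and after $k$ deletions we reach a cograph of clique-width at most $2$. Likewise each connected component of $G$ has diameter at most $k + d = k + 2$, since every component of the cograph obtained by the $k$ deletions has diameter at most $2$ and re-inserting the $k$ deleted vertices can increase the diameter of a component by at most $k$. Both bounds depend only on the parameter $k$, so feeding them into Theorem~\ref{thm:fpt-cw+maxdiam} immediately yields fixed-parameter tractability of \textsc{Graph Burning} parameterized by distance to cographs.

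I do not expect a genuine obstacle here; the result is essentially a corollary of Theorem~\ref{thm:fpt-cw+maxdiam} together with the structural facts above. The only place deserving slight care is the diameter bound for connected cographs (which is where the join structure is used), and making sure the clique-width estimate is applied in the direction that upper-bounds the clique-width of $G$ in terms of $c$ and $k$, rather than the reverse.
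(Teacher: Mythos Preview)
Your approach matches the paper's exactly: instantiate the general observation preceding the corollary with $\mathcal{C}$ the class of cographs and $c = d = 2$, then invoke Theorem~\ref{thm:fpt-cw+maxdiam}.

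There is, however, one quantitative slip (which the paper's own text shares): the bound ``diameter at most $k + 2$'' for each connected component of $G$ is false. The path $P_{7}$ has diameter $6$ but distance $1$ to cographs (deleting the middle vertex leaves $P_{3} \cup P_{3}$). Your justification breaks at the sentence ``re-inserting the $k$ deleted vertices can increase the diameter of a component by at most $k$'': re-inserting a single vertex may \emph{merge} several components of the cograph, and the diameter of the merged component can exceed each piece's diameter by far more than~$1$. The corollary itself is unaffected, since a correct bound is easy: on any shortest path in $G$, the at most $k$ vertices of the deletion set $S$ split the path into at most $k+1$ maximal segments, each contained in a single component of $G-S$ and hence of length at most $d = 2$; thus the diameter is at most $(d+2)k + d = 4k + 2$, still a function of $k$ alone. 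With this correction, Theorem~\ref{thm:fpt-cw+maxdiam} applies as you intended.
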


Now we consider the distance to split graphs.
A graph is a \emph{split graph} if its vertex set can be partitioned into a clique and an independent set.
For a graph $G = (V,E)$, which is not necessarily a split graph,
a subset $S \subseteq V$ is a \emph{split-deletion set} if $G - S$ is a split graph.
Then the distance to split graphs from $G$ is equal to the minimum size of a split-deletion set.
It is known that the split graphs are exactly the $(2K_{2}, C_{4}, C_{5})$-free graphs~\cite{FoldesH77}.
This characterization implies that when designing an algorithm parameterized by distance $d$ to split graphs,
we can assume that a split-deletion set of minimum size is given
since a standard bounded-search tree algorithm finds such a set in time $5^{d} \cdot n^{O(1)}$,
where the number $5$ is the maximum order of the forbidden induced subgraphs.

\begin{theorem}
\textsc{Graph Burning} is fixed-parameter tractable parameterized by distance to split graphs.
\end{theorem}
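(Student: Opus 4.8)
The plan is to follow the component-wise reduction to \textsc{Disjoint Sets} from the proof of Theorem~\ref{thm:fpt-cw+maxdiam}, but to replace its MSO-based ingredient --- unavailable here, as split graphs have unbounded clique-width --- by a direct combinatorial subroutine that exploits the structure of ``a split graph plus few extra vertices''. Fix a minimum split-deletion set $S$ with $|S| = d$ (computable in time $5^{d} n^{O(1)}$ as noted above) and a split partition $(K, I)$ of $G - S$. I would first bound the diameter of every connected component of $G$ by some linear function $f(d)$: a connected split graph has diameter at most $3$, and a shortest path of $G$ uses at most $d$ vertices of $S$ while the portions of it lying in $G - S$ are themselves shortest paths of $G - S$ and hence of length at most $3$, so every such path has length at most $5d+3$. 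As in Theorem~\ref{thm:fpt-cw+maxdiam}, it follows that a burning center of radius at least $f(d)$ burns the entire component containing it.

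The core step --- and, I expect, the main obstacle --- is a subroutine that decides, given a graph $H$ with $H - S_{H}$ split for some $S_{H} \subseteq S$ and given a set $R \subseteq \{0, \dots, f(d)-1\}$, whether $H$ has a burning sequence whose centers use exactly the radii in $R$. Its correctness rests on a structural fact: if a center $b$ has radius $r \ge 3$, then $N_{r}[b] \cap V(H)$ is completely determined by $r$, by which of the four parts --- the clique side $K_{H}$ of $H - S_{H}$, the non-isolated independent vertices, the vertices $I_{H}^{0}$ isolated in $H - S_{H}$, or $S_{H}$ --- contains $b$, and by the \emph{signature} of $b$, namely the vector $(\dist_{H}(b,s))_{s \in S_{H}}$ with entries in $\{0, \dots, f(d)\} \cup \{\infty\}$. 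Indeed, a center in $K_{H}$ or among the non-isolated independent vertices automatically covers all of $K_{H}$ and all non-isolated independent vertices (each lies within distance $3$ of $b$); meanwhile the distance from $b$ to any vertex of $S_{H}$ is an entry of the signature, and any vertex of $I_{H}^{0}$ reaches $b$ only through $S_{H}$, so its distance to $b$ is likewise a function of the signature. Consequently a center of radius at least $3$ has at most $g(d)$ ``useful'' placements for some function $g$ of $d$ alone (one representative per part-and-signature class, together with the at most $d$ vertices of $S_{H}$); and since $R$ contains at most three radii below $3$, the subroutine brute-forces those at most three small-radius centers over $n^{3}$ placements and, independently, tries all $g(d)$ classes for each of the at most $|R| \le f(d)$ remaining centers, checking in polynomial time whether the chosen balls cover $H$. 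This runs in time $g(d)^{f(d)} \cdot n^{O(1)}$.

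It then remains to assemble the pieces. If $k \le f(d)$, apply the subroutine directly to $H = G$ with $S_{H} = S$ and radius set $\{0, \dots, k-1\}$; this replaces the appeal to Observation~\ref{obs:cw+k}, which is unavailable here because the clique-width is not bounded in terms of $d$. Otherwise $k > f(d)$, and we may assume $k \ge p$, the number of connected components, as the instance is negative otherwise; since every center of radius at least $f(d)$ burns a whole component, the instance reduces, exactly as in Theorem~\ref{thm:fpt-cw+maxdiam}, to a \textsc{Disjoint Sets} instance on the universe $\{0, \dots, f(d)-1\} \cup \{c_{1}, \dots, c_{p}\}$ whose sets are $R \cup \{c_{q}\}$ for each component $C_{q}$ and each $R \subseteq \{0, \dots, f(d)-1\}$ with ``$C_{q}$ burnable by radii $R$'' --- the latter decided by the subroutine above instead of by MSO model checking --- asking for $p - k + f(d)$ pairwise disjoint sets, which is solved by the same bounded-rank \textsc{Disjoint Sets} algorithm as in Theorem~\ref{thm:fpt-cw+maxdiam}. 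All steps run in time $g'(d) \cdot n^{O(1)}$, proving the theorem.
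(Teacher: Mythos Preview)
Your approach is genuinely different from the paper's. The paper never bounds the diameter and never invokes the \textsc{Disjoint Sets} machinery of Theorem~\ref{thm:fpt-cw+maxdiam}; instead it (i) shrinks the set $I_{S}$ of independent vertices with neighbours only in $S$ to size $<3\cdot 2^{|S|}$, (ii) peels off the truly isolated vertices $I_{\emptyset}$ to force $k-|I_{\emptyset}|<3+|S|+3\cdot 2^{|S|}$, (iii) guesses all centres lying in $S\cup I_{S}^{*}$ together with their positions, (iv) observes that at most three vertices of $K\cup I_{K}$ matter (or one with index $\ge 3$ already burns the whole part), and (v) reduces the residual covering of $S\cup I_{S}^{*}$ to \textsc{Set Cover} with a universe of size $f(|S|)$. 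Your route --- bound the diameter by $O(d)$, then reuse the clique-width framework with a signature-based replacement for the MSO oracle --- is more unified with Theorem~\ref{thm:fpt-cw+maxdiam} and arguably cleaner; the paper's argument is more ad hoc but yields a self-contained proof independent of Theorem~\ref{thm:fpt-cw+maxdiam}.

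There is, however, one genuine gap. Your structural claim ``$N_{r}[b]$ is determined by the part of $b$ and its signature for $r\ge 3$'' is justified only for $b\in K_{H}\cup (I_{H}\setminus I_{H}^{0})$ (and for $b\in S_{H}$ by enumeration). For $b\in I_{H}^{0}$ the claim is \emph{false} when $b$ is isolated in $H$ itself: any two such vertices share the all-$\infty$ signature yet have distinct singleton balls. This matters precisely in the case $k\le f(d)$, where you run the subroutine on $H=G$; e.g.\ for $G=C_{5}\,\dot\cup\,5K_{1}$ with $d=1$, $f(d)=8$, $k=6$, your subroutine can place at most one isolated vertex as a centre of radius $\ge 3$ (the single class representative) and three more via the brute force on radii $0,1,2$, so it never covers all five isolated vertices and wrongly answers ``no''. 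In the $k>f(d)$ branch the issue disappears, since the subroutine is only ever applied to a single connected component and connected graphs on at least two vertices have no isolated vertices.

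The fix is easy and, not coincidentally, is exactly the paper's step (ii): strip off the truly isolated vertices first (there are at most $p\le k\le f(d)$ of them in the $k\le f(d)$ case, and each must occupy one slot of the burning sequence), then apply your subroutine to the remaining graph with the residual radius set. After this preprocessing every vertex of $I_{H}^{0}$ has a neighbour in $S_{H}$, and then the signature does determine $N_{r}[b]$ for $r\ge 2$ (two such vertices with equal signature have the same neighbourhood in $S_{H}$, hence identical distances to every other vertex and mutual distance $2$). With this amendment your argument goes through.
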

\begin{proof}
Let $(G,k)$ be an instance of \textsc{Graph Burning} and
$S$ be a minimum split-deletion set of $G = (V,E)$. We denote $|S|$ by $s$.
Let $(K,I)$ be a partition of $V-S$, where $K$ is a clique and $I$ is an independent set.
Such a partition can be found in linear time by greedily adding a vertex of minimum degree into $I$.
We further partition $I$ into $I_{K}$, $I_{S}$, and $I_{\emptyset}$ in such a way that
$I_{\emptyset}$ is the set of degree-$0$ vertices in $G$,
$I_{S}$ is the set of vertices in $I \setminus I_{\emptyset}$ that have neighbors only in $S$,
and $I_{K} = I \setminus (I_{\emptyset} \cup I_{S})$.
Note that each vertex in $I_{S}$ has at least one neighbor in $S$
and each vertex in $I_{K}$ has at least one neighbor in $K$
(and possibly some neighbors in $S$).

We first reduce the number of vertices in $I_{S}$.
For a nonempty subset $S' \subseteq S$, let $J_{S'} \subseteq I_{S}$
be the set of vertices whose neighborhood is exactly $S'$.
Since the pairwise distance between vertices in $J_{S'}$ is $2$,
a burning sequence does not pick four or more vertices in $J_{S'}$.
Thus, we can remove all but three vertices in $J_{S'}$ and obtain an equivalent instance.
We apply this reduction to all subsets $S' \subseteq S$
and denote the reduced subset of $I_{S}$ by $I^{*}_{S}$.
Note that $|I^{*}_{S}| < 3 \cdot 2^{|S|}$.
If $k \ge 3 + |S| + |I^{*}_{S}| + |I_{\emptyset}| = 3 + s + 3 \cdot 2^{s} + |I_{\emptyset}|$,
then $(G,k)$ is a yes instance:
we can take all the vertices in $S \cup I^{*}_{S} \cup I_{\emptyset}$ and three vertices in $K \cup I_{K}$
and then arbitrarily order them to construct a burning sequence of $G$.
Hence, in the following, we assume that $k < 3 + s + 3 \cdot 2^{s} + |I_{\emptyset}|$.

We next remove all vertices in $I_{\emptyset}$ and obtain an equivalence instance of a slightly generalized problem.
Observe that if $(G,k)$ is a yes instance,
then $k \ge |I_{\emptyset}|$ and there is a burning sequence $(b_{0},\dots,b_{k-1})$ of $G$
such that the set of first $|I_{\emptyset}|$ vertices $\{b_{0}, \dots, b_{|I_{\emptyset}|}\}$ is $I_{\emptyset}$:
we need to take every isolated vertex into a burning sequence,
but even $b_{0}$ is good enough to burn an isolated component.
In the following, we only consider burning sequences with this restriction.
Now the problem is reduced to the one for finding a sequence 
$(b_{|I_{\emptyset}|}, b_{|I_{\emptyset}|+1},\dots,b_{k-1})$
of vertices in $V \setminus I_{\emptyset}$
such that $\bigcup_{|I_{\emptyset}| \le i \le k-1} N_{i}[b_{i}] = V  \setminus I_{\emptyset}$.
We denote by $(G',k,|I_{\emptyset}|)$ the obtained instance of the new problem, 
where $G' = G[K \cup I_{K} \cup S \cup I^{*}_{S}]$.

To solve the reduced problem, we first guess which vertices in $S \cup I^{*}_{S}$
appear in $(b_{|I_{\emptyset}|}, b_{|I_{\emptyset}|+1},\dots,b_{k-1})$
and where they are placed in the sequence.
The number of candidates of such a guess depends only on $s$
as $|S \cup I^{*}_{S}|^{k-|I_{\emptyset}|} < (s+3\cdot 2^{s})^{s+3\cdot 2^{s}+3}$.
The vacant slots of $(b_{|I_{\emptyset}|}, b_{|I_{\emptyset}|+1},\dots,b_{k-1})$
after the guess tell us which $b_{i}$ belongs to $K \cup I_{K}$.
If there are at most three vertices in $K \cup I_{K}$ appear 
in $(b_{|I_{\emptyset}|}, b_{|I_{\emptyset}|+1},\dots,b_{k-1})$
then we try all $O(n^{3})$ combinations to complete the sequence.
Otherwise, we guess from $O(n)$ candidates the vertex in $K \cup I_{K}$
that appears in $(b_{|I_{\emptyset}|}, b_{|I_{\emptyset}|+1},\dots,b_{k-1})$ and has the largest index.
Since the index of the guessed vertex in $(b_{|I_{\emptyset}|}, b_{|I_{\emptyset}|+1},\dots,b_{k-1})$ is at least $3$
and $K \cup I_{K}$ induces a connected split graph, which has diameter at most $3$,
the guessed vertex in $K \cup I_{K}$ burns all vertices in $K \cup I_{K}$.

Finally we fill the positions in $(b_{|I_{\emptyset}|}, b_{|I_{\emptyset}|+1},\dots,b_{k-1})$ that still remain vacant.
Let $X \subseteq \{|I_{\emptyset}|, \dots, k-1\}$ be the set of indices $i$ for which no vertex is guessed as $b_{i}$ so far,
and let $\overline{X} = \{|I_{\emptyset}|, \dots, k-1\} \setminus X$.
Let $U = V(G') \setminus (\bigcup_{i \in \overline{X}} N_{i}[b_{i}])$.
Note that $U \subseteq S \cup I^{*}_{S}$.
Our task is to find $\{b_{i} \mid i \in X\} \subseteq K \cup I_{K}$ such that
$U \subseteq \bigcup_{i \in X} N_{i}[b_{i}]$.
This task can be seen as an instance $(U', \mathcal{S}, p)$ of \textsc{Set Cover}, where 
$U' = U \cup X$, 
$\mathcal{S} = \{(N_{i}[v] \cap U) \cup \{i\} \mid v \in K \cup I_{K}, i \in X\}$, and
$p = |X| < k - |I_{\emptyset}| \le 3 + s + 3 \cdot 2^{s}$.
Since \textsc{Set Cover} parameterized by $|U'|$ is fixed-parameter tractable~\cite[Lemma~2]{FominKW04}
and $|U'| \le s + 3 \cdot 2^{s} + p$, the theorem follows.
\end{proof}


\section{Negative results}
\label{sec:intractable}

This section is devoted to the proofs of the following theorems.

\begin{theorem}
\label{thm:w2-complete-k}
\textsc{Graph Burning} is $\mathrm{W[2]}$-complete parameterized by $k$.
\end{theorem}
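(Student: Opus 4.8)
We split the proof into the two halves of W[2]-completeness.

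\textbf{Membership in W[2].} It suffices to give a parameterized reduction from \textsc{Graph Burning} (parameter $k$) to \textsc{Set Cover} parameterized by the number of sets in the solution, which is in W[2]. Given $(G,k)$, take as the universe $V(G) \cup \{0,1,\dots,k-1\}$, regarding the integers as fresh elements; for every vertex $v$ and every $i \in \{0,\dots,k-1\}$ put the set $N_{i}[v] \cup \{i\}$ into the family (each $N_{i}[v]$ is computable in polynomial time by breadth-first search); and set the bound on the number of sets to $k$. A burning sequence $(b_{0},\dots,b_{k-1})$ gives the cover $\{N_{i}[b_{i}]\cup\{i\} : 0 \le i \le k-1\}$. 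Conversely, a cover of size at most $k$ must contain, for each of the $k$ integers, at least one set covering it, and since every set in the family contains exactly one integer, such a cover consists of exactly $k$ sets, one of the form $N_{i}[v_{i}]\cup\{i\}$ for each $i$; then $(v_{0},\dots,v_{k-1})$ is a burning sequence. Hence \textsc{Graph Burning} parameterized by $k$ is in W[2].

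\textbf{W[2]-hardness.} We reduce from \textsc{Set Cover} parameterized by the solution size $t$, which is W[2]-hard. From an instance $(U,\mathcal{F},t)$ we build a connected graph $G'$ and set $k' = \Theta(t)$, a function of $t$ alone, so that $(G',k')$ is a yes-instance of \textsc{Graph Burning} if and only if $t$ members of $\mathcal{F}$ cover $U$. The building block is a \emph{selection gadget} for each set $S$: a designated vertex $\sigma_{S}$ linked to a common hub by a path and linked to the element vertices of the members of $S$ by paths, with all path lengths (and $k'$) tuned so that (i)~a fire placed at $\sigma_{S}$ with any of the ``large'' radii occurring in a length-$k'$ burning sequence covers exactly the element vertices of $S$ among all element vertices, together with a controlled amount of the gadget and hub structure --- the windows being chosen so that these radii neither fall short of $S$'s elements nor, via shared elements or via the hub, reach the element vertices of a different set; and (ii)~no radius available in a length-$k'$ sequence is large enough to reach, from a single vertex, the element vertices of two distinct sets. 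To fix how many fires are ``free'' for selection, we attach $k'-t$ auxiliary gadgets, each of which necessarily consumes one fire; using $N_{i}[v]\subseteq N_{j}[v]$ for $i\le j$ one may then assume that the $t$ remaining fires are those with the largest radii.

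Given these properties, the forward direction is: if $\mathcal{F}'\subseteq\mathcal{F}$ with $|\mathcal{F}'|=t$ covers $U$, place the $t$ largest fires at $\{\sigma_{S} : S\in\mathcal{F}'\}$, with the single largest fire additionally ``sweeping'' the hub and the non-selected gadgets, and place the remaining $k'-t$ fires on the auxiliary gadgets; by (i) this burns all of $G'$. Conversely, a length-$k'$ burning sequence must cover every element vertex, and by (ii) each of its fires contributes the element vertices of at most one set, so the ``used'' selection gadgets yield a cover of size at most $t$. The main obstacle is the metric design of the gadget: one must guarantee simultaneously that in the forward direction the $t$ selection fires (plus the sweep) burn \emph{every} auxiliary vertex --- the midpoints of long paths are the delicate case --- and that in the converse direction no placement of a few moderate-radius fires covers more element vertices than $t$ well-chosen sets, which requires closing off the shortcuts produced by shared elements and by the hub. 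Arranging all relevant distances to land in the narrow windows that make both happen while keeping $k'=\Theta(t)$ is the crux; once the right lengths are fixed, both directions reduce to routine distance computations.
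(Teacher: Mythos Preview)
Your membership argument is correct and is exactly the reduction the paper uses (attributed there to Janssen): universe $V\cup\{0,\dots,k-1\}$, sets $N_i[v]\cup\{i\}$, bound $k$.

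Your hardness argument, however, is not a proof; it is a specification of what a proof would have to achieve. You never fix the path lengths, the value of $k'$, or the structure of the auxiliary gadgets, and you explicitly say that ``arranging all relevant distances to land in the narrow windows \dots\ is the crux; once the right lengths are fixed, both directions reduce to routine distance computations.'' That crux is the entire content of the reduction. Worse, the architecture you sketch --- one gadget $\sigma_S$ per set, with the $t$ largest radii all usable interchangeably at any $\sigma_S$ --- runs into a real obstruction you only name but do not resolve: if $u\in S\cap S'$ then $\sigma_S$ and $\sigma_{S'}$ are joined through $u$ by a path of length at most $2(k'-t)$, so a fire of radius $k'-1$ at $\sigma_S$ threatens to reach elements of $S'$ via $\sigma_{S'}$; whether this and the hub shortcuts can all be ``closed off'' while keeping $k'=\Theta(t)$ is exactly what you would have to prove, and you have not. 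Your property~(ii) as literally stated (``reach the element vertices of two distinct sets'') is in fact false whenever two sets share an element and a fire is placed at that element.

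The paper takes a different and fully specified route: instead of one gadget per \emph{set}, it builds one copy $H_i$ of the incidence structure per \emph{radius} $i\in\{2,\dots,k-1\}$, with paths of length $i$ hanging off $V_i$ (forcing $b_i\in V_i$) and paths of length $i-1$ from $U_i$ to the shared clique $U$ (so that choosing $b_i=v_q^{(i)}$ covers exactly the elements of $S_q$ in $U$). This per-radius duplication is precisely what makes the distance bookkeeping go through cleanly; your single-copy design does not obviously admit such an analysis, and in any case you have not supplied one.
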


\begin{theorem}
\label{thm:no-poly-kernel-vc}
\textsc{Graph Burning} does not admit a polynomial kernel parameterized by vertex cover number
unless $\mathrm{NP} \subseteq \mathrm{coNP/poly}$.
\end{theorem}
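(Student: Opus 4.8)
The plan is to exhibit a polynomial parameter transformation from \textsc{Set Cover} parameterized by the size of the universe into \textsc{Graph Burning} parameterized by vertex cover number. \textsc{Set Cover} with this parameter is NP-hard and admits no polynomial kernel unless $\mathrm{NP} \subseteq \mathrm{coNP/poly}$~\cite{DomLS14}, and polynomial parameter transformations propagate kernelization lower bounds in the standard way (the source problem being NP-hard and the target being in NP)~\cite{CyganFKLMPPS15}; since \textsc{Graph Burning} is clearly in NP, the theorem follows once such a transformation is built. Given a \textsc{Set Cover} instance with universe $U = \{u_{1}, \dots, u_{n}\}$, family $\mathcal{F} = \{S_{1}, \dots, S_{m}\}$, and bound $\ell$ (we may assume $\ell \le n$, since otherwise the instance is trivial), the goal is to produce a graph $G$ and an integer $k$ that is polynomially bounded in $n$ such that $\bn(G) \le k$ if and only if some $\ell$ members of $\mathcal{F}$ cover $U$, and, crucially, such that the vertex cover number of $G$ is bounded by a polynomial in $n$.

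The construction reuses the shape of the W[2]-hardness reduction behind Theorem~\ref{thm:w2-complete-k}: an element vertex for each $u_{i}$; a selection vertex $s_{j}$ for each $S_{j}$, adjacent exactly to the element vertices it contains together with a fixed hub vertex; and a timing gadget attached to the hub whose role is to force a length-$k$ burning sequence to spend all but $\ell$ of its rounds inside the gadget, so that the $\ell$ remaining rounds must pick selection vertices whose sets cover $U$. If the reduction for Theorem~\ref{thm:w2-complete-k} is already arranged so that, taking the universe size as the source parameter, the resulting graph has vertex cover number polynomial in $n$, then nothing further is needed. Otherwise — and this is the step I expect to be the real obstacle — the timing gadget must be redesigned to be both small and cheap for a vertex cover, because $m$ may be exponential in $n$, so we cannot afford any per-set padding (in particular, not a long pendant path per set). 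A workable choice is to keep the selection vertices pairwise non-adjacent and to implement timing by $O(n)$ pendant paths attached to the hub, with lengths drawn from a window of size $O(n)$ chosen so that an optimal burning of these paths occupies precisely the prescribed rounds; the gadget then has only $O(n^{2})$ vertices.

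It remains to check the vertex cover bound: the set consisting of $\{u_{1}, \dots, u_{n}\}$, the hub, and every second vertex of each pendant path covers all edges, since every edge incident to a selection vertex reaches the hub or an element vertex, selection vertices are mutually non-adjacent, and path edges are covered by the chosen path vertices. Hence the vertex cover number of $G$ is $O(n^{2}) = \mathrm{poly}(n)$, as the transformation requires, and the number of rounds $k$ is clearly polynomial in $n$. The correctness of the equivalence is the same bookkeeping as in the proof of Theorem~\ref{thm:w2-complete-k}: one argues that the timing gadget admits essentially one optimal burning pattern, so that $\bn(G) \le k$ holds exactly when that pattern is used and the $\ell$ leftover rounds are spent on selection vertices whose element-neighborhoods together exhaust $U$.
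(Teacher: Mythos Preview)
Your overall plan --- a polynomial parameter transformation from \textsc{Set Cover} parameterized by $|U|+s$ to \textsc{Graph Burning} parameterized by vertex cover number --- is exactly the paper's approach, and your conditional observation is the whole proof: the very reduction used for Theorem~\ref{thm:w2-complete-k} already has vertex cover number polynomial in $|U|$ and $k=s+2$. In the paper's construction there are $k-2$ disjoint gadgets $H_{2},\dots,H_{k-1}$, each carrying its own copy $V_{i}$ of the $m$ selection vertices; the union $\bigcup_{i} V_{i}$ is an independent set, and its complement (the element cliques $U,U_{2},\dots,U_{k-1}$, the connecting paths, the attached $i$-paths, and $w,x,y,z$) has size $4+(k-1)|U|+\sum_{i=2}^{k-1}(i+2)(2i-2)$, which is polynomial in $|U|$ and $k$. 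So Lemma~\ref{lem:sc-to-gb} plus this counting is already the PPT.

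Where your write-up goes wrong is in the description of the reduction itself. What you sketch --- a single hub adjacent to every selection vertex, with timing implemented by $O(n)$ pendant paths on that hub --- is \emph{not} the paper's construction, and it does not work as stated. With a common hub, any two selection vertices are at distance~$2$, so any element is at distance at most~$3$ from every selection vertex (via hub and a covering set). Hence placing $b_{i}$ at a selection vertex with $i\ge 3$ burns all of $U$, and only the rounds $i\in\{1,2\}$ can encode a nontrivial set choice; you cannot select $\ell\ge 3$ sets this way. The paper sidesteps this by having a separate gadget $H_{i}$ for each round $i$, with path lengths tuned so that $N_{i}[v^{(i)}_{j}]$ meets the central clique $U$ in exactly $S_{j}$; the $(k-2)m$ selection vertices this creates are harmless for the vertex-cover bound precisely because they stay independent. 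So drop your single-hub fallback and instead verify the vertex-cover count for the actual reduction behind Theorem~\ref{thm:w2-complete-k}.
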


We present a reduction from \textsc{Set Cover} to \textsc{Graph Burning}
that proves both Theorems~\ref{thm:w2-complete-k} and \ref{thm:no-poly-kernel-vc}.
Given a set $U = \{u_{1}, \dots, u_{n}\}$,
a family of nonempty subsets $\mathcal{S} = \{S_{1}, \dots, S_{m}\} \subseteq 2^{U} \setminus \{\emptyset\}$,
and a positive integer $s$,
\textsc{Set Cover} asks whether there exists a subfamily $\mathcal{S}' \subseteq \mathcal{S}$
such that $|\mathcal{S}'| \le s$ and $\bigcup_{S \in \mathcal{S}'} S = U$.

Let $(U = \{u_{1}, \dots, u_{n}\}, \mathcal{S} = \{S_{1}, \dots, S_{m}\}, s)$ be an instance of \textsc{Set Cover}. 
We construct an equivalent instance $(G, k = s+2)$ of \textsc{Graph Burning}.
(See \figref{fig:reduction}.)
We first construct $s = k-2$ isomorphic graphs $G_{2}, \dots, G_{k-1}$ as follows.
For each $i \in \{2,3,\dots,k-1\}$,
the vertex set of $G_{i}$ is $U_{i} \cup V_{i}$,
where $U_{i} = \{u_{1}^{(i)}, \dots, u_{n}^{(i)}\}$ is a clique
and $V_{i} = \{v_{1}^{(i)}, \dots, v_{m}^{(i)}\}$ is an independent set.
In $G_{i}$, $u_{p}^{(i)}$ and $v_{q}^{(i)}$ are adjacent if and only if $u_{p} \in S_{q}$.
From each $G_{i}$, we construct $H_{i}$
by adding $i+2$ copies of a path of $i$ vertices
and all possible edges between each vertex in $V_{i}$
and one of the degree-1 vertices in each path.
We then take the disjoint union of $H_{2}, H_{3}, \dots, H_{k-1}$
and add $U$ as a clique.
For each $i \in \{2,3,\dots,k-1\}$ and $j \in \{1,2,\dots,m\}$,
we connect $u_{j}^{(i)}$ and $u_{j}$ with a path of length $i-1$
with $i-2$ new inner vertices.
Finally, we attach a vertex $w$ to a vertex in $U$,
and a path $(x,y,z)$ to the same vertex.
We set $V_{0} = \{w\}$ and $V_{1} = \{x,y,z\}$.
We denote the constructed graph by $G$.

\begin{figure}[htb]
  \centering
  \includegraphics[width=\textwidth]{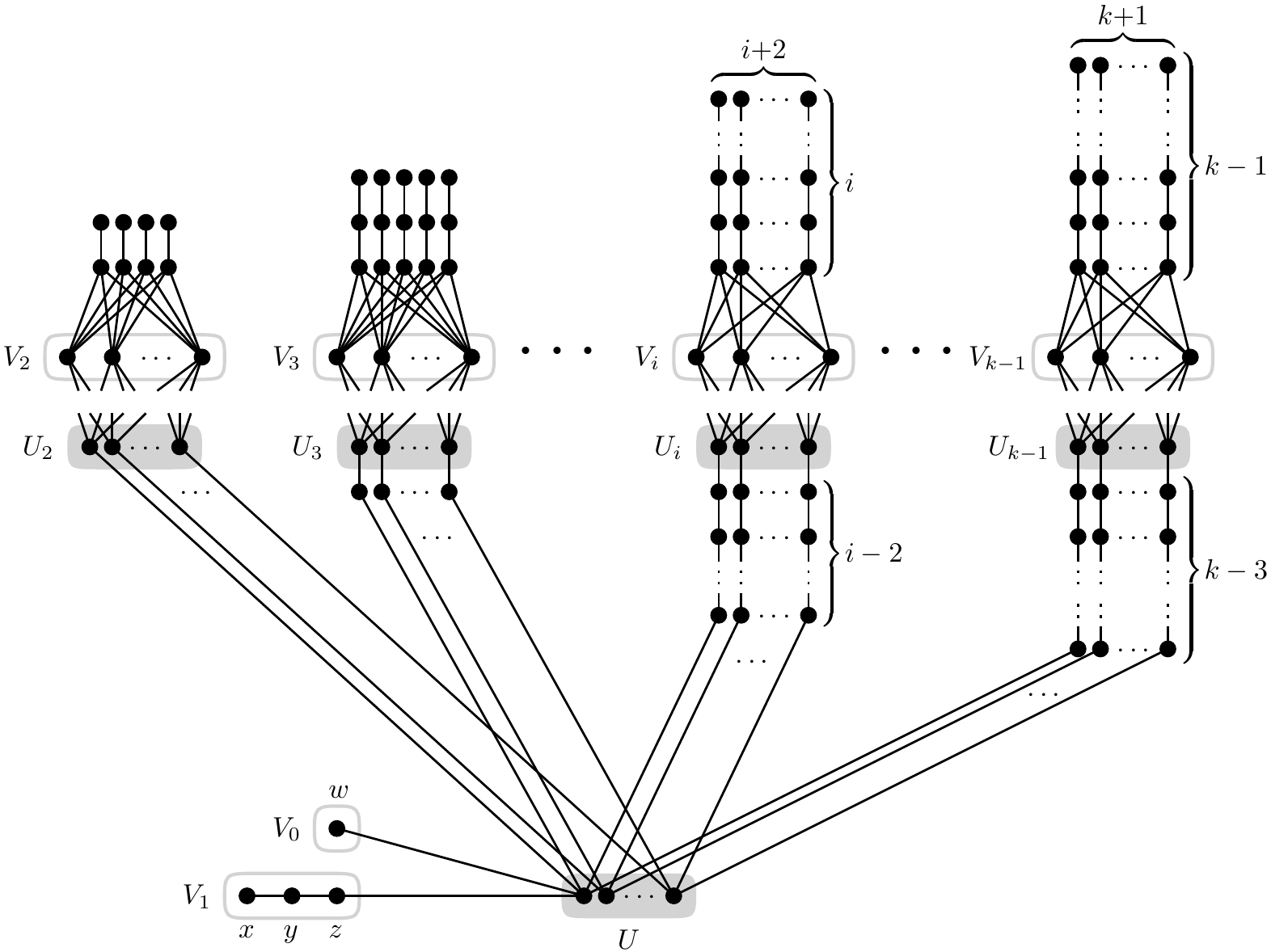}
  \caption{The reduction from \textsc{Set Cover} to \textsc{Graph Burning}.
    The edges in the cliques $U_{2}, \dots, U_{k-1}$, and $U$ are omitted.}
  \label{fig:reduction}
\end{figure}

\begin{lemma}
\label{lem:sc-to-gb}
$(U, \mathcal{S}, s)$ is a yes instance of \textsc{Set Cover}
if and only if
$(G,k)$ is a yes instance of \textsc{Graph Burning}.
\end{lemma}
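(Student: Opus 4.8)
I would prove the two implications separately; the direction from \textsc{Set Cover} to \textsc{Graph Burning} is the explicit-construction direction, while the reverse direction carries the weight through a normalization argument. For the easy direction, take a cover $\mathcal{S}'\subseteq\mathcal{S}$ with $|\mathcal{S}'|\le s$ and pad it (repeating sets if needed) to a list $S_{q_2},S_{q_3},\dots,S_{q_{k-1}}$ of exactly $s=k-2$ members of $\mathcal{S}$ whose union is still $U$. I claim the sequence $(b_0,\dots,b_{k-1})$ given by $b_0=w$, $b_1=y$, and $b_i=v^{(i)}_{q_i}$ for $i\in\{2,\dots,k-1\}$ is a burning sequence. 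Indeed $N_0[w]=\{w\}$ and $N_1[y]=\{x,y,z\}$; and inside each $H_i$ the vertex $v^{(i)}_{q_i}$ is adjacent to a vertex of the clique $U_i$ (as $S_{q_i}\neq\emptyset$) and to a degree-$1$ endpoint of each of the $i+2$ attached paths, so $N_i[v^{(i)}_{q_i}]$ contains all of $U_i\cup V_i$ (radius $\le 2$), all the attached paths in full (radius $\le i$), and all $i-2$ internal vertices of every $U_i$--$U$ connector path (radius $\le i$); moreover, following a connector path, a vertex $u_p\in U$ lies in $N_i[v^{(i)}_{q_i}]$ exactly when $u_p\in S_{q_i}$ (distance $i$ via the connector, distance $i+1$ otherwise). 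Hence $\bigcup_i N_i[b_i]$ covers everything except possibly some vertices of $U$, and it covers all of $U$ precisely because $\bigcup_i S_{q_i}=U$.

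For the converse, let $(b_0,\dots,b_{k-1})$ be any burning sequence of $G$. The key step is the \emph{normalization claim}: we may take $b_0=w$, $b_1=y$, and $b_i\in V_i$ for every $i\in\{2,\dots,k-1\}$. I would establish $b_i\in V_i$ by downward induction on $i$ from $k-1$ to $2$, exploiting the $i+2$ paths attached in $H_i$. Writing $f^{(i)}_1,\dots,f^{(i)}_{i+2}$ for the far endpoints of those paths, an easy distance computation gives that these are pairwise at distance $2i$, and that the set of vertices within distance $i$ of a fixed $f^{(i)}_t$ is exactly $P^{(i)}_t\cup V_i$ (the path itself together with $V_i$). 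Consequently, since the $P^{(i)}_t$ are pairwise vertex-disjoint, a burning vertex $b_j$ with index $j\le i$ is within distance $j$ of at most one of the $i+2$ far endpoints, unless $j=i$ and $b_j\in V_i$. One also checks, using that the clique $U$ separates $H_i$ from the rest of $G$, that the already-placed $b_{i+1},\dots,b_{k-1}\in V_{i+1},\dots,V_{k-1}$ are at distance more than their index from every $f^{(i)}_t$ and cover none of them. So if $b_i\notin V_i$, the $i+2$ far endpoints of $H_i$ must be covered by $b_0,\dots,b_i$ at a rate of at most one each --- impossible, as that is only $i+1$ vertices. Hence $b_i\in V_i$. (The base case $i=k-1$ is the same count, with $b_0,\dots,b_{k-1}$, i.e.\ $k=i+1$ vertices, having to cover $H_{k-1}$'s $i+2$ far endpoints.)

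Once $b_i\in V_i$ for all $i\ge 2$, write $b_i=v^{(i)}_{q_i}$; each such ball meets $U$ in exactly $S_{q_i}$ and is disjoint from $\{w,x,y,z\}$, so $b_0$ and $b_1$ together must cover $\{w,x,y,z\}$ together with $U\setminus\bigcup_i S_{q_i}$. Since $\dist(w,z)=4>2$, at least one of $w,z$ equals $b_0$; since $\dist(w,y)=3>2$, the choice $b_0=z$ fails (then $b_1$ would have to cover both $w$ and $y$), so $b_0=w$; and $\{x,y,z\}\subseteq N_1[b_1]$ then forces $b_1=y$, the unique vertex whose closed neighborhood is $\{x,y,z\}$. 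As $N_0[w]\cup N_1[y]$ is disjoint from $U$, covering $U$ forces $\bigcup_{i=2}^{k-1}S_{q_i}=U$, a cover of size at most $s$, which completes this direction. The step I expect to be the main obstacle is the normalization claim, and inside it the distance bookkeeping: pinning down exactly which vertices reach $H_i$'s far endpoints within their index, and verifying that burning vertices sitting in $U$ or in another $H_{i'}$ are simply too far to help. This is where the precise numbers in the construction (the $i+2$ paths of length $i-1$ inside $H_i$, the connector paths of length $i-1$, and the gadgets $w$ and $(x,y,z)$ hanging off $U$) all have to be used in concert; everything else is routine.
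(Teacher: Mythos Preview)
Your proposal is correct and follows essentially the same approach as the paper's proof. Both directions match: the forward direction uses the same burning sequence $b_0=w$, $b_1=y$, $b_i\in V_i$, and the reverse direction establishes $b_i\in V_i$ by downward induction using the $i+2$ pendant paths in $H_i$, then pins down $b_0=w$ and $b_1=y$, and reads off the cover from $N_i[b_i]\cap U=S_{q_i}$. Your counting formulation (``$i+1$ balls cannot hit $i+2$ far endpoints unless $b_i\in V_i$'') is just the contrapositive of the paper's pigeonhole step (``some pendant path contains none of $b_0,\dots,b_i$, so its far endpoint forces $b_i\in V_i$''); your case analysis for $b_0,b_1$ is a more explicit version of the paper's one-line deduction.
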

\begin{proof}
($\implies$)
Assume that $(U, \mathcal{S}, s)$ is a yes instance of \textsc{Set Cover}
and $\mathcal{S}' \subseteq \mathcal{S}$ is a certificate;
that is, $|\mathcal{S}'| \le s$ and $\bigcup_{S \in \mathcal{S}'} S = U$.
We assume without loss of generality that $|\mathcal{S}'| = s$
and $\mathcal{S}' = \{S_{2}, S_{3}, \dots, S_{s+1 = k-1}\}$.
We set $b_{0} = w$, $b_{1} = y$, and $b_{i} = v_{i}^{(i)}$ for $2 \le i \le k-$1.
We show that $(b_{0}, \dots, b_{k-1})$ is a burning sequence of $G$.

Clearly, $N_{0}[b_{0}] = V_{0}$ and $N_{1}[b_{1}] = V_{1}$.
For $2 \le i \le k-1$, observe that $N_{i}[b_{i}] = N_{i}[v_{i}^{(i)}]$
includes all the vertices of $H_{i}$:
the farthest vertices in $i$-vertex paths have distance exactly $i$ from $v_{i}^{(i)}$;
$\dist(v_{i}^{(i)}, v_{j}^{(i)}) = 2$ for each $j \ne i$
as $v_{j}^{(i)}$ and $v_{i}^{(i)}$ share a neighbor (an endpoint of a path of $i$ vertices);
$\dist(v_{i}^{(i)}, u_{j}^{(i)}) \le 2$ for every $j$ 
since $v_{i}^{(i)}$ has at least one neighbor in the clique $U_{i}$ as $\emptyset \notin \mathcal{S}$.
Moreover, $N_{i}[v_{i}^{(i)}]$ includes all inner vertices of the paths from $U_{i}$ to $U$
as $\dist(v_{i}^{(i)}, u_{j}^{(i)}) \le 2$ for every $j$.
Finally, $u_{j} \in N_{i}[v_{i}^{(i)}]$ if and only if $u_{j} \in S_{i}$:
if $u_{j} \in S_{i}$, then $v_{i}^{(i)}$ and $u_{j}^{(i)}$ are adjacent, and thus
$\dist(v_{i}^{(i)}, u_{j}) \le 1 + \dist(u_{j}^{(i)}, u_{j}) = i$;
otherwise, $v_{i}^{(i)}$ and $u_{j}^{(i)}$ are not adjacent and thus
$\dist(v_{i}^{(i)}, u_{j}) \ge \min\{2 + \dist(u_{j}^{(i)}, u_{j}), 1 + \dist(u_{h \ne j}^{(i)}, u_{j})\} > i$.
This implies that $U \subseteq \bigcup_{2 \le i \le k-1} N_{i}[v_{i}^{(i)}]$
since $\bigcup_{S \in \mathcal{S}'} S = U$.

\medskip

($\impliedby$)
Assume that $(G,k)$ is a yes instance of \textsc{Graph Burning}
and $(b_{0}, \dots, b_{k-1})$ is a burning sequence of $G$.

We first show that $b_{i} \in V_{i}$ for all $0 \le i \le k-1$. Let $i \in \{2,\dots,k-1\}$. 
Assume that we already know that $b_{j} \in V_{j}$ for $i+1 \le j \le k-1$.
Since there are $i+2$ paths attached to $V_{i}$, at least one of them, say $P$, 
has no vertex in the remaining vertices $b_{0}, \dots, b_{i}$.
The degree-1 vertex in $P$ has distance exactly $i$ from every vertex in $V_{i}$
and distance at least $i+1$ from every vertex not in $V(P) \cup V_{i}$.
Hence, $b_{i} \in V_{i}$.
Now we know that $b_{i} \in V_{i}$ for $2 \le i \le k-1$.
Let $u_{j} \in U$ be the vertex where $V_{0}$ and $V_{1}$ are attached to.
For $2 \le i \le k-1$, we have $\dist(b_{i}, u_{j}) \ge i$,
and thus $N_{i}[b_{i}]$ contains no vertex in $V_{0} \cup V_{1}$.
Since $N_{0}[b_{0}]$ will cover only one vertex,
$b_{1} = y$ and $b_{0} = w$ hold.

For $2 \le i \le k-1$, let $b_{i} = v_{h_{i}}^{(i)}$.
Since $N_{0}[b_{0}] = \{w\}$ and $N_{1}[b_{1}] = \{x,y,z\}$, we have
$U \subseteq \bigcup_{2 \le i \le k-1} N_{i}[v_{h_{i}}^{(i)}]$.
As we saw in the only-if case,
$u_{j} \in N_{i}[v_{h_{i}}^{(i)}]$ if and only if $u_{j} \in S_{h_{i}}$ for $1 \le j \le n$.
This implies that $N_{i}[v_{h_{i}}^{(i)}] \cap U = S_{h_{i}}$,
and thus $U = \bigcup_{2 \le i \le k-1} S_{h_{i}}$.
Therefore, the subfamily $\{S_{h_{2}}, S_{h_{3}}, \dots, S_{h_{k-1}}\} \subseteq \mathcal{S}$
of at most $k-2 = s$ subsets shows that $(U, \mathcal{S}, s)$ is a yes-instance of \textsc{Set Cover}.
\end{proof}

\begin{proof}
[Proof of Theorem~\ref{thm:w2-complete-k}]
By Lemma~\ref{lem:sc-to-gb},
the construction of $(G,k)$ from $(U,\mathcal{S},s)$ described above is a parameterized reduction 
from \textsc{Set Cover} parameterized by $s$ to \textsc{Graph Burning} parameterized by $k=s+2$.
Since \textsc{Set Cover} is W[2]-complete parameterized by $s$~\cite{DowneyF95i},
the W[2]-hardness follows.

The membership to W[2] can be shown by the following reduction to \textsc{Set Cover}.
Let $(G = (V,E), k)$ be an instance of \textsc{Graph Burning}.
We set $s = k$, $U = V \cup \{0, 1, \dots, k-1\}$, 
and $\mathcal{S} = \{N_{i}[v] \cup \{i\} \mid v \in V, 0 \le i \le k-1\}$.
This is just an undirected version of the proof by Janssen~\cite{Janssen20arxiv},
who showed the membership to W[2] for \textsc{Graph Burning} on directed graphs,
and the correctness can be shown in the same way.
\end{proof}

\begin{proof}
[Proof of Theorem~\ref{thm:no-poly-kernel-vc}]
The graph $G$ constructed above has an independent set $\bigcup_{2 \le i \le k-1} V_{i}$.
The vertices not belonging to this independent set form a vertex cover
of size $4 + (k-1)|U| + \sum_{2 \le i \le k-1} (i+2)(2i-2)$, which is a polynomial in $k = s+2$ and $|U|$.
By Lemma~\ref{lem:sc-to-gb},
the construction of $(G,k)$ from $(U,\mathcal{S},s)$ described above is a polynomial parameter transformation~\cite{BodlaenderTY11}
from \textsc{Set Cover} parameterized by $|U| + s$ to \textsc{Graph Burning} parameterized by vertex cover number.
Since \textsc{Set Cover} parameterized by $|U| + s$
does not admit polynomial kernels unless $\mathrm{NP} \subseteq \mathrm{coNP/poly}$~\cite{DomLS14},
the theorem holds.
\end{proof}

The reduction above also shows the W[2]-hardness parameterized by diameter
since the diameter of a connected graph is smaller than the square of its burning number~\cite{BonatoJR14,BonatoJR16}.
We further observe that the graph $G$ in the reduction is $P_{(4k-3)}$-free.
That is, $G$ does not contain a path of $4k-3$ vertices as an induced subgraph.
Let $P$ be an induced path in $G$. 
For $i \in \{2,\dots,k-1\}$, let $H_{i}'$ be the graph consists of $H_{i}$ and the paths from $U_{i}$ to $U$.
Since $U$ is a clique, there are at most two indices $i$ such that $P$ intersects $H_{i}' - U$.
We can see that $|V(P) \cap V(H_{i}')| \le 2i+1$ for each $i$, and thus $|V(P)| \le 2(k-1)+1 + 2(k-2)+1 = 4k-4$.
This implies the following $\mathrm{W[2]}$-hardness.
\begin{corollary}
\textsc{Graph Burning} on $P_{q}$-free graphs is $\mathrm{W[2]}$-hard parameterized by $q$.
\end{corollary}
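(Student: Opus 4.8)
The plan is to reuse the reduction of Lemma~\ref{lem:sc-to-gb} verbatim and merely verify that the graph it produces lies in a $P_{q}$-free class with $q$ controlled by the source parameter. Recall that from a \textsc{Set Cover} instance $(U,\mathcal{S},s)$ the reduction outputs $(G,k)$ with $k=s+2$. I would set $q = 4k-3 = 4s+5$ and prove that $G$ is $P_{q}$-free, that is, every induced path $P$ in $G$ has at most $4k-4$ vertices. Since $q$ is bounded by a function of $s$ (and $k\le q$), the map $(U,\mathcal{S},s)\mapsto(G,k)$ together with the parameter $q=4s+5$ is then a parameterized reduction from \textsc{Set Cover} parameterized by $s$ — which is $\mathrm{W[2]}$-hard~\cite{DowneyF95i} — to \textsc{Graph Burning} on $P_{q}$-free graphs parameterized by $q$, and the claimed hardness is immediate.

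So the content is the length bound on induced paths. First I would isolate, for each $i\in\{2,\dots,k-1\}$, the subgraph $H_{i}'$ consisting of $H_{i}$ together with the $U_{i}$–$U$ connecting paths, and observe that the only vertices $H_{i}'$ shares with the rest of $G$ lie in the clique $U$; likewise $\{w\}$ and the path $(x,y,z)$ attach to $G$ only through $U$. Since $U$ is a clique, $P$ contains at most two vertices of $U$, and if it contains two they are consecutive on $P$; hence $P$ splits as at most two pieces hanging off $U$, so it meets $V(H_{i}')\setminus U$ for at most two distinct indices $i$ (any contribution from $\{w\}$ or $(x,y,z)$ is a constant that is dominated by the main term once $s$ is large, and we may pad small instances). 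Second — and this is where the real work sits — I would establish $|V(P)\cap V(H_{i}')|\le 2i+1$ for every $i$ by a case analysis on how an induced path can traverse the gadget $H_{i}'$: the clique $U_{i}$ contributes at most two vertices; distinct vertices of the independent set $V_{i}$ lie at distance exactly $2$ via shared neighbours (the degree-one endpoints of the $i$-vertex pendant paths, and, when it exists, a common vertex of $U_{i}$), which tightly limits how often $P$ can revisit $V_{i}$; each pendant path has only $i$ vertices and is entered by $P$ from at most one end; and each $U_{i}$–$U$ connecting path contributes only its $i-2$ internal vertices. Combining these counts gives $2i+1$, and summing over the at most two relevant indices the worst case is $i=k-1$ and $i=k-2$, yielding $|V(P)| \le (2(k-1)+1)+(2(k-2)+1) = 4k-4 < q$.

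The main obstacle is entirely contained in that second step: ruling out that an induced path ``weaves'' many times among $V_{i}$, $U_{i}$, the $i+2$ pendant paths, and the connecting paths of a single gadget $H_{i}'$. Everything else — the reduction itself, the separation argument using that $U$ is a clique, and the final bookkeeping for the parameterized reduction — is routine given Lemma~\ref{lem:sc-to-gb} and the fact that the burning number lower-bounds $\sqrt{\operatorname{diam}}$ is not even needed here.
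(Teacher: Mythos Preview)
Your proposal is correct and follows essentially the same approach as the paper: define $H_i'$, use that $U$ is a clique to limit $P$ to two gadgets, assert $|V(P)\cap V(H_i')|\le 2i+1$, and sum the two largest values to get $4k-4$. The paper is in fact terser than you are---it simply states ``We can see that $|V(P)\cap V(H_i')|\le 2i+1$'' without the case analysis you outline, and it does not explicitly address the $w$ and $(x,y,z)$ gadgets at all; your remark about absorbing their constant contribution (or padding small instances) is a detail the paper omits.
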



\bibliography{burn-para}

\end{document}